\begin{document}
	
	\title{Adaptive Target Tracking with a Mixed Team of Static and Mobile Guards: Deployment and Activation Strategies
	}
	
	
	\author{Guillermo J. Laguna \and Sourabh Bhattacharya   
	}
	
	
	\institute{Guillermo J. Laguna \at
		Department of Mechanical Engineering \\
		Iowa State University\\
		Ames, IA USA
		Tel.: +521-462-1905156\\
		\email{gjlaguna@iastate.edu}           
		\and
		Sourabh Bhattacharya \at
		Department of Mechanical Engineering \\
		Iowa State University\\
		Ames, IA USA
		\email{sbhattac@iastate.edu}
	}

	\date{Received: date / Accepted: date}

	\maketitle
	
		\begin{abstract}
			This work explores a variation of the art gallery problem in which a team of static and mobile guards track a mobile intruder with unknown maximum speed. First, we present an algorithm to identify {\it candidate vertices} in a polygon at which either static guards can be placed or they can serve as endpoints of the trajectory of mobile guards. Based on the triangulation of the polygon and the deployment of the guards we propose an allocation technique for the guards, such that each one of them is assigned to guard a subregion of the environment when the intruder is inside it. The allocation strategy leads to a classification of the guards based on their task and coordination requirements. Next, we present an activation strategy for the static guards that is adaptive to the instantaneous speed of the intruder. The deployment and the activation technique guarantee that a variable speed intruder is successfully tracked. Simulation results are presented to validate the efficacy of the proposed techniques.
		\end{abstract}
		\begin{keywords} {Target tracking, Mobile coverage, Sensor networks.}
		\end{keywords}
	\section{Introduction}
	\label{sec:intro}
	
	In the last two decades, there has been a widespread deployment of multi-robot systems as sensor platforms for gaining situational awareness in military \cite{Simon:2004} as well as civilian \cite{Werner:2006} applications. Robots deployed in such scenarios are capable of autonomously making decisions and taking actions with minimal human intervention. Prior information regarding the exogenous processes that effect the performance of such robotic systems is an essential component of their decision-making process \cite{theodoridis2012toward}. However, in reality, the system seldom has a complete knowledge about the uncertainty in the environment, especially when adversaries are involved. Due to the sensitive nature of applications, a system designer either uses a Bayesian approach \cite{ho1964bayesian} to maximize the expected reward or a Game-theoretic approach \cite{bacsar2008h} to guarantee a minimum reward in such scenarios. In this work, we explore a different direction, i.e., an {\it adaptive} approach to address the challenge that arises due to incomplete information. To be specific, we consider the problem of deploying a mixed team of static and mobile sensors for persistent observation of a mobile intruder. The team of sensors lack complete information regarding the motion model of the intruder. The deployment technique for the sensors proposed in this paper is in part motivated from guard deployment techniques prevalent in art-gallery problems \cite{O'Rourke:1983}.
	
	An important challenge for a network designer is to balance the trade-off between energy consumption of the network \cite{Bhatti:2009} and its overall coverage \cite{Howard:2002}. Energy efficiency is an integral component in the design of sensor networks to increase their average lifetime. A frequently used approach to extend the average lifetime of a sensor network is to schedule activation cycles of the individual sensors \cite{Ying:2006}. For example, \cite{Tian:2002} proposes scheduling algorithms that randomly assign sensor nodes to multiple working subsets, which alternately perform the sensing tasks for monitoring a sensor field. In \cite{Deng:2005}, authors propose a scheduling algorithm for clustered sensor networks that deactivates nodes based on their relative distance to the cluster head to improve energy saving. In addition to energy efficiency, scheduling schemes often need to fulfill a secondary objective, for example, preserving the transmission range \cite{Chen:2007} or the sensing coverage \cite{Kumar:2004} of the network. For a static camera network, reducing the volume of data collected is an important secondary objective since it lowers the overall latency of the network \cite{aghajan2009multi} by reducing the communication and processing overheads. A mobile camera network can alleviate this problem of {\it data deluge} \cite{baraniuk2011more} by reducing the number of sensors that need to be active. The solution to the persistent tracking problem explored in this paper is a collaborative scheme between a team of mobile sensors aided by a team of static sensors.
	
	In this work, we leverage results from art-gallery problems \cite{O'Rourke:1983}. The art gallery problem originates from a real-world problem of guarding an art gallery with a team of guards who together can observe the whole gallery. For an infinite speed intruder, the problem of persistently tracking an intruder with a set of sensors is equivalent to the problem of distributing sensors in the environment such that every region inside the environment is within the sensing range of at least one sensor (it is covered) at all times, which is an instance of the area coverage problem \cite{Howard:2002}. Results from the art-gallery problem can be used to deploy static guards \cite{O'Rourke:1983,Hoffmann:1990} in a polygon to cover it. For an intruder with finite speed, fewer guards can be deployed to track the intruder. In \cite{Laguna:2016,Laguna:2017}, we proposed a tracking strategy that deploys at most $\lfloor\frac{n}{4}\rfloor$ diagonal guards in a polygon with $n$ sides, and derived the maximum speed of the intruder for which the tracking strategy works for a fixed speed of the guards. All the aforementioned works assume that the maximum speed of the intruder is known \textit{a priori} which is a severe limitation since knowledge about the adversary is limited in surveillance applications. {\bf In contrast, this paper deals with the scenario when the guards do not have any \textit{a priori} knowledge about the maximum speed of the intruder.} 
	
	Keeping a mobile target within its sensing range in an environment containing obstacles gives rise to a path planning problem for the mobile observers/sensors. In visibility-based target tracking, the observers/sensors are equipped with vision sensors to track the mobile targets. A detailed review regarding prior work in tracking a single target with one mobile observer is provided in \cite{Bhattacharya:2010,Bhattacharya:20112}. For multiple mobile observers, centralized \cite{Mehmetcik:2013} as well as decentralized techniques \cite{Luke:2005} have been proposed to track a single target \cite{Hausman:2015,Williams:2015} or a team of targets \cite{Murrieta:2002,zou2015visibility,Jung:2002}. In contradistinction to the previous works that assume a free observer with no constraints in motion, this paper assumes that the observer is either static or restricted to move along a line segment inside the polygon. For the sake of clarity, we would like to mention here the distinction between our problem and that of {\it target search} which has received significant attention in the past \cite{LaValle:1997,Suzuki:1992}. In target search, the objective is to deploy a team of mobile robots to locate a target inside an environment, whereas we assume that the target is initially visible to the observer in our current work. 

	To be specific, the contributions of this work are as follows: (i) We present a strategy to deploy a team of static and diagonal guards to track an intruder in a polygonal environment without \textit{a priori} knowledge about its maximum speed. (ii) We present a scheme to activate the guards based on the instantaneous speed of the intruder. Therefore, the proposed tracking strategy is {\it adaptive} in the sense that it activates/deactivates the guards based on the instantaneous speed of the intruder. (iii) We show that the proposed tracking strategy coincides with the strategy to cover a polygon with mobile guards for low speed regimes of the intruder \cite{O'Rourke:1983}. In contrast, for high speed regimes of the intruder, the tracking strategy coincides with the strategy to cover a polygonal environment with static guards.
	
	The paper is organized as follows. In Section \ref{sec:problem}, we present the problem formulation. In Section \ref{sec:deployment}, we present the deployment strategy for the guards. In Section \ref{sec:crit}, the guards are classified depending on the trajectories selected for them in their deployment and a motion strategy for tracking is presented. In Section \ref{sec:activ}, we present the activation strategy for the guards. In Section \ref{sec:examp}, we present the results of the proposed technique in different environments. Section \ref{sec:conclusion} presents the conclusions and future work.

	\section{Motivation and Problem Formulation}
	\label{sec:problem}
	
	Consider a mobile intruder $I$ inside a simple $n$-gon ($n$ edges/vertices) $P$ with no holes. Let $v_e=v_e(t)\in[0,v_e^{max}]$ denote the speed of $I$. Let $p_I=p_I(t)$ denote the location of $I$ at time $t$. A set of guards $S_g'$ are deployed in $P$. We assume that each guard is equipped with an omnidirectional vision sensor with infinite range. As a result, the guard can see everything that lies inside its visibility polygon \cite{Berg:2008}. In this work, we assume that the guards do not have \textit{a priori} knowledge about the maximum speed of the intruder and can only measure its instantaneous speed.
	

	Next, we define some mathematical objects related to a polygon. A {\it triangulation} of a polygon $P$ is defined as a (not necessarily unique) partition of $P$ into a set of disjoint triangles such that the vertices of the triangles are vertices of the polygon, and there is no intersection between any pair of edges of the triangles. An edge of a triangle in the triangulation is called a \textit{diagonal} \cite{O'Rourke:1983}. The triangulation of a polygon $P$ can be represented as a planar graph $G=G(P)$ called a \textit{triangulation graph}. $V(G)$ (vertex set of $G$) corresponds to the vertices of $P$, $E(G)$ (edge set of $G$) corresponds to the diagonals of the triangulation of $P$, and $T(G)$ (triangle set of $G$) corresponds to the faces\footnote{In a triangulation graph all the faces are triangles.} of $G$. Clearly, there is a bijection between the set of vertices of $P$ and $V(G)$. Additionally, there is a bijection between the set of diagonals of the triangulation of $P$ and $E(G)$. As a result, we do not make a distinction between the following: (i) vertices of $P$ and the vertices in $V(G)$ (ii) diagonals of the triangulation of $P$ and the edges in $E(G)$ (iii) triangles of the triangulation of $P$ and the triangles in $T(G)$. Let $D$ be the dual graph of $G$. Each vertex in $V(D)$ corresponds to a triangle in $T(G)$, and there is an edge in $E(D)$ only between vertices that correspond to triangles in $T(G)$ that share an edge. For a simple polygon $P$, the dual graph of any triangulation of $P$ is a tree \cite{Berg:2008}.
	
	We define $S_g$ and $S_g^v$ as the sets of mobile and vertex guards, respectively. $S_g$ and $S_g^v$ are disjoint sets ($S_g \cap S_g^v = \emptyset$) such that $S_g \cup S_g^v = S_g'$. Each $g_i \in S_g^v$ is located at a vertex $v_i \in V(G)$ and is always static. Moreover, each $g_i \in S_g^v$ can either be in an \textit{active mode} (consuming power with its camera switched on) or in \textit{sleep mode} (negligible power consumption with its camera switched off). A vertex guard is said to be activated (deactivated) when it goes from the sleep mode to an active mode (active mode to a sleep mode). Next, we describe mobile guards. All guards in $S_g$ are {\it diagonal guards} \cite{O'Rourke:1983}, i.e., each $g_i \in S_g$ is a mobile guard associated to a distinct diagonal $h_i$ of the triangulation of $P$. To be more specific the trajectory of each $g_i \in S_g$ is defined by a diagonal $h_i \in E(G)$, so $g_i$ is only allowed to move along $h_i$. The endpoints of $h_i$ are denoted by $v_j(i)$ with $j \in \{1,2\}$. Diagonal guards can cover\footnote{In this work we say that a region is \textit{covered} by a guard if the region is inside the visibility polygon of the guard.} different regions of the environment depending on the location along their diagonals. Although diagonal guards are assumed to be mobile, they can choose to be static (as vertex guards) in case the regions that are assigned to them can be covered from one endpoint of their diagonal. Thus, they can stay motionless in certain situations. However, in contrast with vertex guards, they are always in active mode. A diagonal guard is assumed to have a maximum speed $0<\bar{v}_g<\infty$.  
	
	Let $r=v_e/\bar{v}_g$ denote the {\it speed ratio}. Without any \textit{a priori} knowledge about the maximum value of $v_e$, the deployment strategy should be capable of handling all possible values of $r$. For $r=\infty$, either $v_e=\infty$ or the guards are static ($\bar{v}_g=0$). In either case, the guards need to cover every region inside $P$. $\lfloor n/3 \rfloor$ activate static guards are sufficient to cover the polygon since \cite{O'Rourke:1987} shows that there is a set $S_c \subset V(G)$ of at most $\lfloor n/3 \rfloor$ vertices that can dominate\footnote{A triangulation graph $G$ is dominated by a set of vertices $S_c$ if at least one vertex of each triangle in $T(G)$ is an element of $S_c$.} $G$, which trivially implies that they can cover $P$. However, fewer than $\lfloor n/3 \rfloor$ activated guards might suffice for finite $r$.
	
	A sufficient condition for tracking is to ensure that there is always at least one guard covering the triangle\footnote{A triangle is said to be covered by a guard $g_i$, if $g_i$ is located at the boundary of the triangle.} where the intruder is located. Therefore, we do not require all the faces of the triangulation to be covered at the same time, only a subset of them. By allowing some guards of $S_g'$ to be mobile, it is possible to cover such subset of triangles with fewer active guards. Our objective is to determine a deployment strategy for a team of at most $\lfloor n/3 \rfloor$ guards such that at most $\lfloor n/4 \rfloor$ of them are diagonal guards and the rest are vertex guards. Moreover, we want to obtain an activation/deactivation strategy for the vertex guards which ensures that for any speed $v_e$, the intruder will always be observed by activating a sufficient number of vertex guards to aid the diagonal guards. The deployment strategy guarantees that for small speed regimes of the intruder only the diagonal guards are activated, while for high speed regimes all the guards (including the vertex guards) are activated\footnote{For high speed regimes, when all the vertex guards are activated, all the diagonal guards become static and they locate themselves at one of the vertices of their diagonals} to ensure coverage to track the intruder.
	
	Figure \ref{fig:motiv_examp} illustrates an example to motivate our problem. Figure \ref{fig:motiv_examp} (a) shows a triangulated polygon. Circles shaded in red represent guards ($g_1$ and $g_2$), and the red segments represent the diagonals allocated to the guards ($h_1$ and $h_2$, respectively). The circle shaded in black denotes the intruder $I$. The circle shaded in yellow is a vertex guard $g_3$ in sleep mode, which is located at endpoint $v_1(2)$ of $h_2$. $g_1$ and $g_2$ are sufficient to track the intruder at small values of $v_e$. However, beyond a critical value of $v_e$, $g_3$ needs to be activated to track the intruder since the maximum speed of $g_1$ and $g_2$ may not be sufficient to guarantee persistent tracking. This is illustrated in Figure \ref{fig:motiv_examp} (b) where we assume that a sufficiently fast intruder can ``hide'' inside triangle $T_1$. In that case, the inactive vertex guard $g_3$ is activated, from its location it covers triangles $T_1$, $T_2$ and $T_3$.
	

	
	
	\begin{figure} [h!]
		
		\begin{center}
		\includegraphics[width=1\linewidth,height=0.48\linewidth]{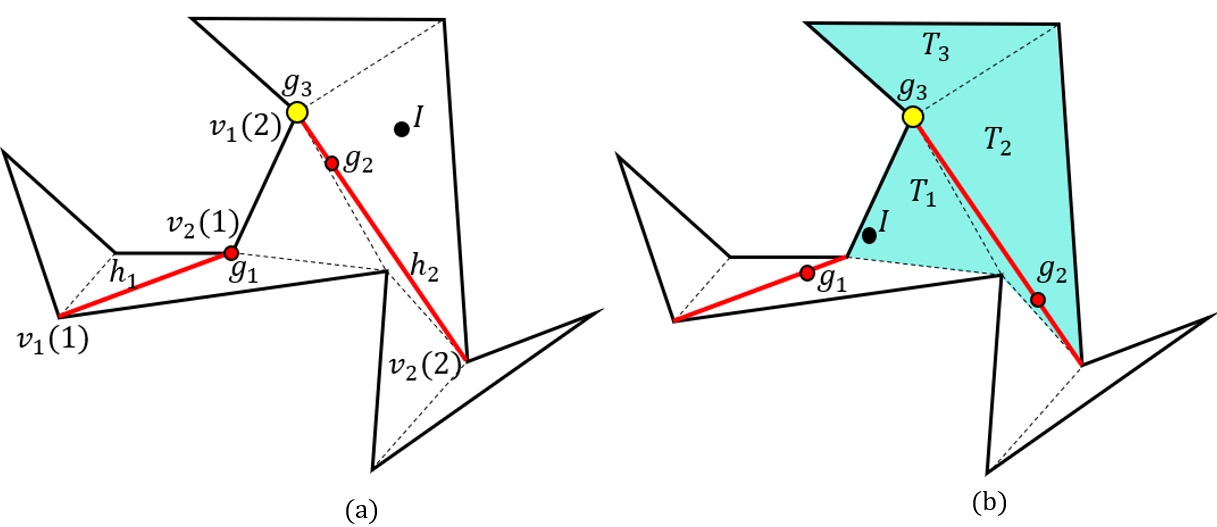}
		\end{center}
		
		\caption{(a) Two diagonal guards $g_1$ and $g_2$ are tracking $I$, $g_3$ is a deactivated vertex guard. (b) Activation of $g_3$ to aid $g_1$ and $g_2$.}
		\label{fig:motiv_examp}
		
	\end{figure}
	
	
	\section{Deployment of Guards}
	\label{sec:deployment}
	
	In this section, we propose a deployment strategy for the guards for tracking the intruder based on its instantaneous speed. The outline of the deployment is as follows. A set $S_c$ of at most $\lfloor n/3 \rfloor$ vertices of $G$ that dominate $G$ must be found, such that each vertex in $S_c$ is associated to a guard in $S_g'$. Since $S_g'=S_g \cup S_g^v$ and $|S_g| \leq \lfloor n/4 \rfloor$, where $|S_g|$ is the cardinality of $S_g$, then at most $\lfloor n/4 \rfloor$ vertices in $S_c$ correspond to diagonal guards and the remaining vertices in $S_c$ correspond to vertex guards. For the case of a vertex guard $g_i \in S_g^v$, the vertex in $S_c$ associated to $g_i$ corresponds to the location of $g_i$ in the triangulation of $P$. In contrast, for the case of a diagonal guard $g_i \in S_g$, the vertex in $S_c$ associated to $g_i$ does not correspond to the location of $g_i$ since it is a diagonal guard. Hence, for a diagonal guard $g_i$, the associated vertex in $S_c$ corresponds to an endpoint of $h_i$. Additionally, we require that for low speed regimes of the intruder, the diagonal guards must be able to guarantee persistent tracking. This implies that the set of diagonals $S_h=\{h_i \in E(G): g_i \in S_g  \}$ must be a set of \textit{dominating diagonals}\footnote{A triangulation graph $G$ is said to be \textit{dominated by a set of diagonals} $S_h$ if at least one vertex of each triangle in $T(G)$ is an endpoint of a diagonal in $S_h$.}. Therefore, $S_c$ is a set of at most $\lfloor n/3 \rfloor$ vertices of $G$ that dominates $G$, such that there is a dominating set $S_h$ of at most $\lfloor n/4 \rfloor$ diagonals, where an endpoint of each $h_i \in S_h$ is a different vertex in $S_c$.
	
	We call $S_c$ as \textit{the set of candidate vertices}. Since the vertices in $S_c$ dominate $G$ and each guard in $S_g'$ corresponds to a different vertex in $S_c$. The definition of $S_c$ guarantees that for infinite speed guards, locating every guard at its corresponding vertex in $S_c$ and activating every vertex guard suffices to cover all triangles in $T(G)$, which implies that persistent tracking is guaranteed with at most $\lfloor n/3 \rfloor$ guards regardless of the speed of the intruder, while for speeds $v_e < \infty$ each guard $g_i \in S_g^v$ can be activated/deactivated depending on $v_e$.
	
	In \cite{O'Rourke:1987}, it is shown that for any triangulation graph $G$ of a simple polygon $P$ with $n \geq 10$ edges it is always possible to find a diagonal $d \in E(G)$ that partitions $P$ into $P_1$ and $P_2$ such that $P_1$ is the triangulation graph of a pentagon, hexagon, heptagon, octagon or nonagon, which we call \textit{basic polygons}. In \cite{O'Rourke:1987}, it is shown that any triangulation graph of a pentagon, hexagon or heptagon has one dominating diagonal. It is also shown that any triangulation graph of a nonagon has two dominating diagonals, and in general every triangulation graph of a polygon with $n \geq 5$ vertices has a set of dominating diagonals with size of at most $\lfloor n/4 \rfloor$.
	
	\noindent
	
	\begin{remark}
		\bf{Lemma \ref{lemma:5} and Theorem \ref{theorem:3} prove the existence of at most $\lfloor\frac{n}{3}\rfloor$ vertices which cover the polygon, and contain a subset of at most $\lfloor\frac{n}{4}\rfloor$ vertices which are end points of diagonal guards that provide mobile coverage. In contrast, Theorem $1$ in \cite{O'Rourke:1983} proves the existence of dominating diagonals in a polygon. Although, proofs of Lemma \ref{lemma:5} and Theorem \ref{theorem:3} are inductive, the intermediate steps are different from the proof of Theorem $1$ in \cite{O'Rourke:1983}.}
	\end{remark}

	\begin{lemma}
		\label{lemma:5}
		For any given triangulation graph of a basic polygon, there exists a set of at most $\lfloor\frac{n}{3}\rfloor$ vertices which dominate the triangles of the triangulation, and contain a subset of at most $\lfloor\frac{n}{4}\rfloor$ vertices which are endpoints of diagonals which dominate the triangles of the triangulation. (There exists a set of at most $\lfloor\frac{n}{3}\rfloor$ candidate vertices).
	\end{lemma}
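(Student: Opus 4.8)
The plan is to prove Lemma~\ref{lemma:5} by a short case analysis over the five basic polygons, $n\in\{5,6,7,8,9\}$, driving everything from the already-established fact that a triangulation of any polygon with $n\geq 5$ vertices admits a dominating set of diagonals $S_h$ with $|S_h|\leq\lfloor n/4\rfloor$ (one diagonal for the pentagon, hexagon and heptagon, two for the nonagon, and at most two for the octagon by the general bound). The key observation is that the set $W$ of \emph{all} endpoints of the diagonals in $S_h$ is itself a vertex-dominating set: by the definition of dominating diagonals every triangle contains an endpoint of some $h_i\in S_h$, i.e.\ a vertex of $W$, and $|W|\leq 2\lfloor n/4\rfloor$. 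Thus the task reduces to shrinking $W$ to a set $S_c$ of size at most $\lfloor n/3\rfloor$ while preserving domination and retaining a distinct endpoint of each $h_i$. Since $2\lfloor n/4\rfloor=\lfloor n/3\rfloor=2$ for $n=6,7$, in those two cases nothing need be discarded: $S_c=W$ together with the single dominating diagonal works verbatim.

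For the pentagon, $\lfloor n/3\rfloor=1$ while $|W|=2$, so one endpoint must be dropped. Here I would use the structural fact that every pentagon triangulation is a fan; its apex lies in all three triangles and is an endpoint of the unique dominating diagonal, so taking $S_c$ to be the apex and $S_h$ the incident dominating diagonal closes this case. The remaining two polygons, the octagon and the nonagon, are where $W$ may genuinely have four vertices while only two (resp.\ three) are allowed, and these carry the real work.

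For the octagon I intend a clean degree argument that avoids enumerating triangulations. A $2$-vertex dominating set $\{x,y\}$ always exists (the smallest class of a $3$-coloring of the triangulation graph has size at most $\lfloor 8/3\rfloor=2$ and dominates). If both $x$ and $y$ have triangle-incidence degree at least $2$, they are incident to diagonals $d_1\ni x$ and $d_2\ni y$; since the endpoint set of $\{d_1,d_2\}$ contains $\{x,y\}$, which already dominates, $S_h=\{d_1,d_2\}$ is a dominating diagonal set and $S_c=\{x,y\}$ suffices. Otherwise every such set uses an ear-tip (degree-$1$) vertex $x$, forcing its partner $y$ to lie in the remaining $5$ triangles, i.e.\ $\deg(y)\geq 5$; I would then show that the ear $T^\ast$ missing $y$ attaches to $y$'s fan along a diagonal $(p,q)$ of a triangle $\{y,p,q\}$, that (as $y$ is not an ear tip) at least one of $(y,p),(y,q)$ is a diagonal, and that this diagonal's endpoints dominate all six triangles. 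This yields a single dominating diagonal, so $S_c$ is its two endpoints.

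Finally, for the nonagon I would start from the two given dominating diagonals $d_1,d_2$ with endpoint set $W$, $|W|\leq 4$. If $|W|\leq 3$ (the diagonals share an endpoint) then $S_c=W$ dominates, has the right size, and contains distinct endpoints of $d_1,d_2$. If $|W|=4$, I would remove one redundant endpoint, keeping one endpoint of each diagonal. The main obstacle is precisely the sub-case where the two dominating diagonals are vertex-disjoint and no single endpoint of $W$ is redundant; here the naive reduction fails, and I expect to resolve it by re-selecting the two dominating diagonals (equivalently, choosing which endpoint of each to place in $S_c$ so that a single additional vertex covers the triangles left uncovered), justified by a structural case analysis on the dual tree of the triangulation. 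I anticipate this vertex-disjoint nonagon case to be the hardest step, since it is the only one where the domination constraint and the ``one distinct endpoint per diagonal'' constraint genuinely compete for the budget $\lfloor 9/3\rfloor=3$.
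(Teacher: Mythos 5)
Your route is genuinely different from the paper's: the paper peels an ear off each basic polygon to reduce it to the next smaller one (hexagon $=$ pentagon $+$ triangle, heptagon $=$ hexagon $+$ triangle, and so on), carrying along the freedom to relocate one candidate vertex at each step, whereas you start from a known dominating set of at most $\lfloor n/4\rfloor$ diagonals and try to shrink its endpoint set to $\lfloor n/3\rfloor$ vertices, with a separate $3$-coloring/degree argument for the octagon. The pentagon, hexagon, heptagon and octagon cases you sketch do close: for $n=6,7$ nothing needs to be discarded; for the pentagon every triangulation is a fan whose apex is incident to both diagonals; and for the octagon your dichotomy works, since if one vertex of the $2$-element dominating set is an ear tip the other must meet five of the six triangles, and the ear's attaching edge $(p,q)$ cannot have both $(y,p)$ and $(y,q)$ on the boundary without making $y$ itself an ear tip.

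The problem is the nonagon, which you yourself flag as the hardest case: it is announced, not proved. When the two dominating diagonals are vertex-disjoint you must exhibit three vertices that dominate all seven triangles while containing one endpoint of each diagonal, and ``re-selecting the diagonals, justified by a structural case analysis on the dual tree'' is a plan rather than an argument. This is precisely where the domination budget $\lfloor 9/3\rfloor=3$ and the one-endpoint-per-diagonal constraint genuinely compete, and it is where the paper does its real work (nonagon $=$ octagon $+$ ear, using the fact that the octagon's second candidate vertex can be placed on an arbitrary edge so that it absorbs the extra triangle). Until that case is written out the lemma is not established. A secondary concern: even if you complete all five cases, your construction yields only the bare existence statement, whereas the paper's proof additionally records that one candidate vertex can be chosen arbitrarily (pentagon, hexagon, nonagon) or on an arbitrary edge (heptagon, octagon); the induction in Theorem~\ref{theorem:3} invokes exactly these stronger properties when it adjoins triangles of $G_2$ to $G_1$, so a replacement proof of Lemma~\ref{lemma:5} that drops them would break the downstream argument.
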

	\begin{proof} 
		\begin{enumerate}
			\item For a pentagon, the proof of the lemma follows from the proof of \cite{O'Rourke:1987} (Lemma $3.3$,page $85$) which states that there is always a vertex that can dominate any triangulation graph of a pentagon, so it can always be dominated by one diagonal guard with such a vertex as an endpoint of its diagonal while the opposite endpoint can be any other vertex of the triangulation. Hence, $|S_c|=|S_h|=1 \leq \lfloor 5/3 \rfloor$.
			
			\item Irrespective of the triangulation of the hexagon, it can always be decomposed into a pentagon $P_1$ and a triangle $T_1$ such that $T_1$ can be dominated by a vertex $v_1$ which is an arbitrary vertex of $T_1$. Therefore, a candidate vertex $v_0$ exists from which $P_1$ can be dominated according to the case of a pentagon. We can always select a diagonal with an endpoint at $v_0$ and the other at one vertex of $T_1$. Thus, $|S_c| \leq 2 = \lfloor 6/3 \rfloor$ and $|S_h| = 1=\lfloor 6/4 \rfloor$.
			
			\item Regardless of its triangulation, a heptagon can always be decomposed into a hexagon $H_1$ and a triangle $T_1$. Based on the case of a hexagon, there are two candidate vertices $v_0$ and $v_1$ from which $H_1$ can be dominated. Since $v_1$ can be an arbitrary vertex, a vertex of $T_1$ can always be chosen to cover it. $v_1$ is a candidate vertex on an arbitrary edge. The diagonal is selected such that one of its endpoints is $v_0$, and its other endpoint is a vertex of $T_1$. Thus, $|S_c| \leq 2 = \lfloor 7/3 \rfloor$ and $|S_h| = 1=\lfloor 7/4 \rfloor$.
			
			\item Regardless of its triangulation, an octagon can always be decomposed into a heptagon $H_2$ and a triangle $T_1$. Based on the case of a heptagon, there are two candidate vertices $v_0$ and $v_1$ from which $H_2$ can be dominated. Since $v_1$ can be a vertex of an arbitrary edge of $H_2$, one of the endpoints of the edge shared by $H_2$ and $T_1$ can always be chosen as a candidate vertex. One diagonal is selected such that one of its endpoints is $v_0$ and it dominates $H_2$. If it cannot dominate $T_1$, any diagonal with $v_1$ as an endpoint can be chosen to dominate $T_1$. Thus, $|S_c| \leq 2 = \lfloor 8/3 \rfloor$ and $|S_h| \leq 2 =\lfloor 8/4 \rfloor$.
			
			\item Regardless of its triangulation, a nonagon can always be decomposed into the triangulation graph of an octagon $O_1$ and a triangle $T_1$. From the case of an octagon, there are two candidate vertices $v_0$ and $v_1$ that dominate $O_1$. Hence, an additional guard $v_2$ is required to cover $T_1$. Since $v_1$ can be an arbitrary vertex of $T_1$ which in turn can be arbitrarily chosen to include any vertex of the boundary of the nonagon, it follows that $v_1$ can be arbitrarily placed, and the remaining triangulation subgraph of the nonagon has two candidate vertices. A diagonal incident\footnote{A diagonal is said to be \textit{incident} to a vertex if the vertex is an endpoint of the diagonal.} to $v_2$ can dominate at least two triangles of $G$. The remaining subgraph of $G$ is the triangulation graph of a heptagon with two candidate vertices ($v_0$ and $v_1$), and it can be dominated by a single diagonal guard with an endpoint incident to $v_0$ or $v_1$ (the case of a heptagon). Thus, $|S_c| \leq 3 = \lfloor 9/3 \rfloor$ and $|S_h| \leq 2=\lfloor 9/4 \rfloor$.
		\end{enumerate}
	\end{proof}

	\begin{theorem}
		\label{theorem:3}
		Every triangulation graph $G$ of a polygon with $n \geq 5$ vertices can be dominated by at most $\lfloor n/3 \rfloor$ candidate vertices. 
	\end{theorem}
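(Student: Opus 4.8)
The plan is to prove the statement by strong induction on $n$, using Lemma~\ref{lemma:5} for the base cases and the basic-polygon decomposition of \cite{O'Rourke:1987} for the inductive step. The base cases are the basic polygons $5\le n\le 9$, which are exactly what Lemma~\ref{lemma:5} establishes; for bookkeeping in the recursion I would also record the two trivial cases of a triangle ($n=3$, dominated by any one of its vertices, with no diagonal required) and a quadrilateral ($n=4$, dominated by either endpoint of its single triangulating diagonal). In each of these cases the set of candidate vertices has the required size $\lfloor n/3\rfloor$ and the distinguished diagonal-endpoint subset has size $\lfloor n/4\rfloor$.

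For the inductive step, assume $n\ge 10$ and that the statement holds for every triangulation graph with fewer than $n$ vertices. By \cite{O'Rourke:1987} there is a diagonal $d$ splitting $P$ into a basic polygon $P_1$ with $n_1\in\{5,\dots,9\}$ vertices and a remaining polygon $P_2$ with $n_2=n-n_1+2<n$ vertices (the subcases $n_2=3,4$ falling under the trivial cases above). I would apply Lemma~\ref{lemma:5} to $P_1$ to obtain candidate vertices $S_c^1$ with $|S_c^1|\le\lfloor n_1/3\rfloor$ and dominating diagonals $S_h^1$ with $|S_h^1|\le\lfloor n_1/4\rfloor$, and the induction hypothesis to $P_2$ to obtain $S_c^2,S_h^2$ with the analogous bounds. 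Then $S_c=S_c^1\cup S_c^2$ dominates every triangle of $G$, since each triangle lies wholly in $P_1$ or in $P_2$ and is dominated there, and similarly $S_h=S_h^1\cup S_h^2$ is a dominating set of diagonals.

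The vertex count is the first thing to verify, and it forces a strengthening of the induction hypothesis: I would carry along the extra assertion, already visible in Lemma~\ref{lemma:5} where candidate vertices are repeatedly placed on an arbitrary edge, that one candidate vertex may be required to sit at a prescribed boundary vertex, in particular at an endpoint of the separating diagonal $d$. Applying this to both $P_1$ and $P_2$ lets the same endpoint of $d$ serve as a candidate vertex in each piece, so that identifying it gives $|S_c|\le\lfloor n_1/3\rfloor+\lfloor n_2/3\rfloor-1$. Since $n=n_1+n_2-2$, a short residue computation shows $\lfloor n_1/3\rfloor+\lfloor n_2/3\rfloor-1\le\lfloor (n_1+n_2-2)/3\rfloor=\lfloor n/3\rfloor$ for all $n_1,n_2$, so the bound on $|S_c|$ is never violated.

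The diagonal count is where I expect the real difficulty. The diagonals of $S_h^1$ and $S_h^2$ live in different subpolygons and cannot be merged, so the naive bound is only $|S_h|\le\lfloor n_1/4\rfloor+\lfloor n_2/4\rfloor$; writing $r_i=n_i\bmod 4$, this sum overshoots $\lfloor n/4\rfloor$ by one precisely when $r_1+r_2\le 1$ (for instance a $14$-gon cut into two octagons yields four diagonals where three suffice). Clearing these residue-critical cases is the crux of the argument. My plan is to absorb the offending residue by re-choosing the cut exactly as in the nonagon case of Lemma~\ref{lemma:5}, peeling a single triangle across $d$ so that one piece changes its class modulo $4$ while the shared endpoint of $d$ keeps both dominating-diagonal sets incident to a common candidate vertex; alternatively, since \cite{O'Rourke:1987} already guarantees a dominating-diagonal set of size at most $\lfloor n/4\rfloor$ for the whole $n$-gon, one may fix $S_h$ to be such a set and use the induction only to complete $S_c$ around its endpoints. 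Either way, the residual work is the case-by-case floor arithmetic confirming that both $\lfloor n/3\rfloor$ and $\lfloor n/4\rfloor$ survive the gluing, which becomes routine once the shared-vertex and re-decomposition bookkeeping has been pinned down.
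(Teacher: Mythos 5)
Your skeleton (strong induction, base cases from Lemma~\ref{lemma:5}, the basic-polygon splitting diagonal of \cite{O'Rourke:1987}) is the same as the paper's, and you correctly spot that the naive union of the two pieces' candidate sets overshoots $\lfloor n/3\rfloor$ by one when $(n_1 \bmod 3)+(n_2\bmod 3)\le 1$. But the device you use to recover that unit is not sound. You propose strengthening the induction hypothesis to ``one candidate vertex may be required to sit at a prescribed boundary vertex, in particular at an endpoint of $d$,'' and applying this to \emph{both} pieces so the shared endpoint is counted once. That strengthened statement is already false in the base case: a fan-triangulated pentagon has exactly one dominating vertex (the fan apex), and $\lfloor 5/3\rfloor=1$, so its unique candidate vertex cannot be prescribed. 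Lemma~\ref{lemma:5} provides a much weaker kind of freedom --- one vertex placeable ``on an arbitrary edge'' or ``at an arbitrary vertex of $T_1$'' within a particular decomposition --- and nothing of the sort is available for the inductive piece $P_2$, which is exactly where you would need it. There is a secondary problem as well: if the shared vertex is to serve as the distinguished endpoint of a dominating diagonal in each piece, you violate the requirement that every diagonal of $S_h$ have a \emph{distinct} candidate vertex as an endpoint.

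The paper closes the gap in the opposite direction: the arbitrary vertex of the basic piece $G_1$ is placed at an endpoint of $d$ and used to dominate one or two triangles of $G_2$, which are then \emph{adjoined} to $G_1$; this physically shrinks $G_2$ to $n-6$ (or $n-9$) vertices before the plain induction hypothesis is applied, so the counts come out exactly as $2+(\lfloor n/3\rfloor-2)$, etc., with no shared-vertex bookkeeping and no strengthened hypothesis. The $\lfloor n/4\rfloor$ bound, which you defer as ``routine floor arithmetic,'' is in fact settled by the same case analysis (for instance, the $k=7$ case splits into two subcases according to whether the octagon has a single dominating diagonal or donates three triangles to $G_2$ leaving a pentagon); it is not a separable residue-clearing step but part of the same adjoining construction. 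As written, your inductive step does not go through.
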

	\begin{proof}
		Lemma \ref{lemma:5} proves the statement of the theorem for $5 \leq n \leq 9$. Therefore, we will prove the theorem for the case when $n \geq 10$ using induction. Let us assume that the theorem holds for any $n' < n$ and $n\geq10$. In \cite{O'Rourke:1987}, it is shown that a diagonal $d$ that partitions $G$ into two graphs $G_1$ and $G_2$, where $G_1$ contains $k$ boundary edges with $4 \leq k \leq 8$, always exist. We consider each value of $4 < k\leq 8$:

		\begin{enumerate}
			\item $k=5$: $G_1$ has $k+1=6$ boundary edges including $d$. By lemma \ref{lemma:5}, $G_1$ has two candidate vertices $v_0$ and $v_1$, with $v_1$ located arbitrarily. There is a dominating diagonal with $v_0$ or $v_1$ as an endpoint. $G_2$ has $n-k+1=n-4$ boundary edges including $d$, and since $v_1$ is arbitrary, it can be an endpoint of the edge between $G_1$ and $G_2$. Hence, we can always select $v_1$ such that at least two triangles $T_a$ and $T_b$ of $G_2$ are incident to $v_1$. Let $S_v(T_a \cup T_b)$ and $S_e(T_a \cup T_b)$ be the sets of vertices and edges of the union of $T_a$ and $T_b$ respectively. A graph $G_0$ can be constructed such that $V(G_0)=V(G_1) \cup S_v(T_a \cup T_b)$ and $E(G_0)= E(G_1) \cup S_e(T_a \cup T_b)$ (we say that $T_a$ and $T_b$ were adjoined to $G_1$), $G_2$ has now $n-6$ vertices which implies that it has at most $\lfloor n/3 \rfloor - 2$ candidate vertices. Thus, $G_0$ together with $G_2$ has at most $\lfloor n/3 \rfloor$ candidate vertices.

			\item$k=6$: This is similar to the previous case, but $G_1$ has $k+1=7$ boundary edges and $v_1$ is the endpoint of an arbitrary edge. $G_2$ has $n-k+1=n-5$ boundary edges including $d$. Since $v_1$ is the endpoint of an arbitrary edge, it can belong to the edge between $G_1$ and $G_2$. Hence, $v_1$ is selected such that it dominates at least one triangle of $G_2$. $G_0$ is obtained by adjoining this triangle to $G_1$, so $G_2$ has $n-6$ vertices. Therefore, it has at most $\lfloor n/3 \rfloor - 2$ candidate vertices. The result follows.
			
			\vspace{0.015in}
			\item $k=7$: $G_1$ has $k+1=8$ boundary edges. $G_2$ has $n-7+1=n-6$ boundary edges. In \cite{O'Rourke:1983} (Theorem $1$,page $280$) there are two cases: (i) the triangulation graph of the octagon has a single dominating diagonal. Therefore, $v_0$ and $v_1$ can be the endpoints of the diagonal. $G_2$ has $n-6$ boundary edges. Hence, it has $\lfloor n/3 -2 \rfloor$ candidate vertices, and $G$ has at most $\lfloor n/3\rfloor$ candidate vertices. (ii) A graph $G_3$ is obtained by adjoining three triangles of the triangulation graph of the octagon to $G_2$. The remaining subgraph corresponds to a pentagon. Since $G_3$ has $n-4+1=n-3$ vertices, it has $\lfloor n/3 \rfloor -1$ candidate vertices. Therefore, $G$ has at most $\lfloor n/3 \rfloor$ candidate vertices since the triangulation of the remaining pentagon has only one candidate vertex \ref{lemma:5}.
			
			\item $k=8$: $G_1$ has $k+1=9$ boundary edges, and by Lemma \ref{lemma:5}, it has three candidate vertices $v_0$, $v_1$ and $v_2$ that dominate $G$ such that $v_2$ is an arbitrary endpoint of the edge between $G_1$ and $G_2$, and it is an endpoint of one of the two dominating diagonals. $v_2$ can be chosen to cover at least two triangles of $G_2$. $G_0$ is obtained by adjoining those two triangles to $G_1$. Consequently, $G_2$ has $n-9$ boundary edges. Hence, $G_2$ has $\lfloor n/3 \rfloor -3$ candidate vertices, it follows that $G$ has at most $ \lfloor n/3 \rfloor$ candidate vertices.
		\end{enumerate}
		
		It is guaranteed that for any simple polygon with $n \geq 10$ vertices, there is always a diagonal $d$ that partitions the triangulation graph of the polygon such that the partition corresponds to one of the aforementioned cases. Since for each one of such cases it was shown that there exists a set of at most $\lfloor n/3 \rfloor$ candidate vertices, and by Lemma \ref{lemma:5} we know that the statement is always true for $4<n<10$, it follows that any triangulation graph of a simple polygon with $n>4$ vertices can always be dominated by at most $\lfloor n/3 \rfloor$ candidate vertices.
	\end{proof}
	
	Theorem \ref{theorem:3} proves the existence of a set of at most $\lfloor n/3 \rfloor$ candidate vertices that dominates $G$. Based on the different cases of the theorem and the existence of the diagonals that partition $G$ into subgraphs corresponding to basic polygons Algorithm \ref{alg:guards} is a strategy that we propose to determine the set of candidate vertices $S_c$ and its corresponding subset of dominating diagonals $S_h$. The objective of Algorithm \ref{alg:guards} is to iteratively partition $G$ into triangulation subgraphs of basic polygons, and once that $G$ is partitioned Algorithm \ref{alg:guards} exhaustively finds the sets of candidate vertices and dominating diagonals for each subgraph of the partition. At each iteration, it searches for a diagonal $d$ that separates a triangulation subgraph of a basic polygon (denoted by $G_p$) such that there is no other diagonal in $E(G_p)$ that can separate a subgraph of a smaller basic polygon. Also, for each subgraph $G_p$ identified, a set of candidate vertices in $V(G_p)$ is found such that they can cover all the triangles in $T(G_p)$ that are not already covered by other candidate vertices. It also finds a set of diagonals that can cover the triangles in $T(G_p)$ that are not already covered by other diagonals (and each diagonal must have a different candidate vertex as an endpoint). The process is repeated until the remaining non-partitioned subgraph has $9$ vertices or less. The first \textbf{while} cycle (Line $5$) finds the diagonal $d$ that partition $G$ into triangulation subgraphs that correspond to basic polygons. This can be completed in $O(n)$ time by traversing the dual graph $G_D$. Recall that $|V(D)|=n-2$ and $D$ is a tree. In Figure \ref{Fig:examp_dep} (a) a simple polygon is shown, it has $n=18$ vertices and $16$ triangles, the green segments represent the diagonals $d_1$, $d_2$ and $d_3$ found by Algorithm \ref{alg:guards}. $d_1$ separates the triangulation graph of the hexagon that contains triangles $T_5$, $T_6$, $T_7$ and $T_8$. $d_2$ (along with $d_1$) separates an hexagon containing $T_1$, $T_2$, $T_3$ and $T_4$. Finally, $d_3$ is found, which separates a heptagon containing triangles $T_9$, $T_{10}$, $T_{11}$, $T_{12}$ and $T_{13}$. The remaining subgraph corresponds to a pentagon. Figure \ref{Fig:examp_dep} (b) shows the corresponding dual graph, the blue vertices correspond to the triangles of the hexagon separated by $d_1$, the orange vertices to the triangles separated by $d_2$, and the green vertices to the triangles separated by $d_3$.
	
	\begin{figure}[thpb]
		\begin{center}
	\includegraphics[width=1\linewidth,height=0.51\linewidth]{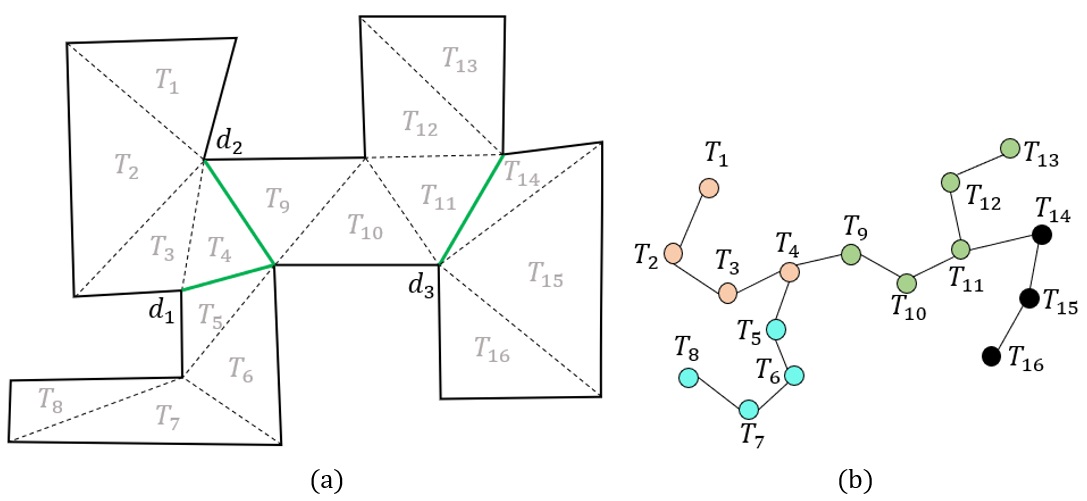}
		\end{center}
		\caption{(a) Triangulation graph of a polygon partitioned into triangulations of basic polygons. (b) Corresponding dual graph.}
		\vspace{-0.1in}
		\label{Fig:examp_dep}
	\end{figure}

	After the first \textbf{while} loop, a graph $G_{pol}$ is created (line $10$) such that each vertex $v \in V(G_{pol})$ corresponds to the triangulation graph of each basic polygon $G_1$ that was found in line $6$, and there is an edge $e \in E(G_{pol})$ only between vertices that correspond to basic polygons that have a common edge in $G$. For this step, $D$ along with $S_D$ can be used to identify in linear time the triangles in $T(G)$ that correspond to the basic polygons of the partition, and consequently their subgraphs $G_1$. This is done in $O(n)$ time. In Figure \ref{Fig:examp_dep2} (b) the graph $G_{pol}$ that corresponds to example of Figure \ref{Fig:examp_dep} is shown, the blue and orange vertices correspond to the hexagons found in the partition, the green vertex corresponds to the heptagon and the black vertex corresponds to the remaining pentagon. The second \textbf{while} loop identifies the minimum set of vertices that can cover the triangles in $T(G_i)$ that are not already covered by other candidate vertices (line $14$). Finding such a subgraph $G_i$ takes $O(n)$ time. It also finds the corresponding minimum set of diagonals that can dominate the triangles in $T(G_i)$ that are not already covered by other diagonals such that each one of them has a different candidate vertex as an endpoint (line $15$). These steps can be completed exhaustively in constant time since the triangulation graph $G_i$ always corresponds to a basic polygon. Since $|V(G_{pol})|< n$, the second \textbf{while} loop clearly takes $O(n)$ time. It follows that the time complexity of the algorithm is $O(n^2)$. The procedure to mark the visited vertices in $V(G_{pol})$ ensures that at any time a triangulation subgraph is selected by the algorithm to find its corresponding candidate vertices and dominating diagonals, there is at most one neighboring vertex in $V(G_{pol})$ that corresponds to a triangulation subgraph such that not all of its dominating diagonals and candidate vertices have already been found. The second \textbf{while} loop obtains a set of at most $\lfloor n/4 \rfloor$ diagonal guards using the basic polygons represented in $G_{pol}$. Once that $S_h$ and $S_c$ are found, diagonal guards are deployed in the diagonals of $S_h$ while vertex guards are deployed in the vertices $S_c$ that are not associated to diagonals in $S_h$.	
	
	\begin{figure}[thpb]
		\begin{center}
			\includegraphics[width=0.9\linewidth,height=0.51\linewidth]{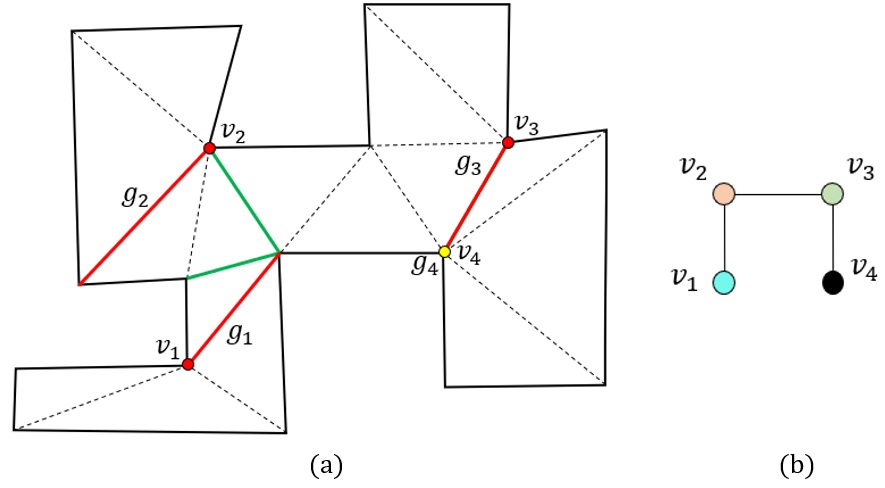}
		\end{center}
		\caption{(a) Candidate vertices and dominating diagonals. (b) Corresponding graph $G_{pol}$.}
		\vspace{-0.1in}
		\label{Fig:examp_dep2}
	\end{figure}
	
	In Figure \ref{Fig:examp_dep2} (a) the red diagonals represent the dominating diagonals, the red shaded circles correspond to the candidate vertices assigned to the dominating diagonals and the yellow shaded circle corresponds to the remaining candidate vertex which is not associated to any diagonal. Starting with the hexagon defined by triangles $T_8$, $T_7$, $T_6$ and $T_5$ there is a single vertex from which all the triangles of the hexagon can be covered ($v_1$). Any diagonal with $v_1$ as an endpoint dominates the hexagon. However, the diagonal chosen is the one that dominates more triangles (in the example it is the diagonal that dominates $T_4$ $T_9$ and $T_{10}$). Next, the algorithm proceeds with the second hexagon, where $v_2$ is found and since any diagonal of the hexagon with $v_2$ as an endpoint dominates the hexagon, the diagonal labeled as $g_2$ was arbitrarily selected. The algorithm proceeds with the triangulation graph of the heptagon where two candidate vertices $v_3$ and $v_4$ are found, and also one diagonal incident to both which is able to dominate the triangles that are not already dominated. Hence, $S_c=\{ v_1, v_2, v_3, v_4 \}$ and $S_h=\{ h_1, h_2, h_3 \}$. Clearly,  $|S_c| < \lfloor 18/3 \rfloor$ and $|S_h| < \lfloor 18/4 \rfloor$. Finally, three diagonal guards $g_1$, $g_2$ and $g_3$ are deployed along $h_1$, $h_2$ and $h_3$ respectively, and a vertex guard $g_4$ is located at $v_4$.


	\begin{algorithm}
		\caption{Guard Deployment}
		\begin{algorithmic}[1]
			\State\textbf{Input}: $G$.
			\State\textbf{Output}: $S_h$ and $S_c$
			\State $S_D \leftarrow \emptyset$ is the set of diagonals $d$
			\State $G' \leftarrow G$
			\While{$G'$ has $n \geq 10$ vertices}
			\State find $d$ that separates the triangulation of a minimal basic polygon $G_1$ from $G'$
			\State $G'$ becomes the subgraph obtained by removing $G_1$ excepting the vertices of $d$
			\State add $d$ to $S_D$
			\EndWhile
			\State create $G_{pol}$ from $G$ using the diagonals in $S_D$
			\While{there is an unmarked vertex in $G_{pol}$}
			\State $v_i \leftarrow$ unmarked vertex in $G_{pol}$ with at most one unmarked neighbor
			\State $G_i \leftarrow$ subgraph of $G$ that corresponds to $v_i$
			\State $s_c \leftarrow$ appropriate candidate vertices of $G_i$
			\State $s_h \leftarrow$ appropriate diagonals of $G_i$ 
			\State add vertices of $s_c$ to $S_c$
			\State add diagonals of $s_h$ to $S_h$
			\State mark $v_i$
			\EndWhile
		\end{algorithmic}
		\label{alg:guards}
	\end{algorithm}
	
	\section{Critical Regions and Motion Strategy}
	\label{sec:crit}

	In this section we present a motion strategy for the mobile guards for persistent tracking. We also classify the triangles of the triangulation of $G$ depending on the diagonals of the guards incident to them. A classification of guards is also presented based on the types of triangles that are incident to the endpoints of their diagonals, and a condition that guarantees tracking when the intruder is inside a given triangle is also presented.
	
	Let $T_j(i) \subset T(G)$ be the set of triangles incident to endpoint $v_j(i)$. We say that the set of triangles $T(i) = T_1(i) \cup T_2(i)$ is ``incident'' to $h_i$. Additionally, we define $d_M^i=r l_i$. It is the ``Maximum'' distance that the intruder can travel while $g_i$ moves across $h_i$. The triangles of $T(G)$ can be classified in the following categories based on the relative location of the diagonals in $S_h$ (refer to Figure \ref{fig:classtriang}):
	
	\begin{enumerate}
		\item A triangle is called \textit{safe} if it can be covered at any time the intruder is inside it. They are illustrated as light blue shaded triangles in Figure \ref{fig:classtriang}. Trivially, a triangle such that one or more of its edges is a diagonal in $S_h$, or a triangle such that there is a static guard located at any of its vertices is a safe triangle. In Figure \ref{fig:classtriang}, $g_3$ is a static guard located at the vertex represented as a red circle and all the triangles that have such a vertex in common are safe triangles. 
		\item A triangle is called \textit{unsafe} if it is not a safe triangle and there is only one guard that can cover it from one of its endpoints. They are illustrated as orange shaded triangles in Figure \ref{fig:classtriang}.
		\item A triangle is called \textit{regular} if it is neither safe nor unsafe. They are illustrated as unshaded triangles in Figure \ref{fig:classtriang}. Notice that for each regular triangle there are at least two dominating diagonals (red segments) incident to its vertices.
		\item Safe Zone: A \textit{safe zone} of a guard $g_i$, denoted by $A(i)$, is the set of triangles adjacent which have $h_i$ as an edge. Notice that all the triangles in a safe zone are safe triangles, see Figure \ref{fig:classtriang}.
		\item Augmented Safe Zone: An augmented safe zone $B(i)$ is the largest set of adjacent safe \footnote{Two triangles are adjacent if they share an edge.} in $T(i)$ such that $A(i) \subseteq B(i)$
		\item Unsafe Zone: An \textit{unsafe zone} of an endpoint $v_j(i)$ of $h_i$, with $j \in \{1,2\}$,is the set of unsafe triangles incident to $v_j(i)$ and it is denoted by $U_j(i)$. Also let $U(i)=U_1(i) \cup U_2(i)$ be the unsafe zone of $g_i$.
	\end{enumerate}
	
	\begin{figure} 
		\begin{center}
			\includegraphics[width=0.95\linewidth,height=0.54\linewidth]{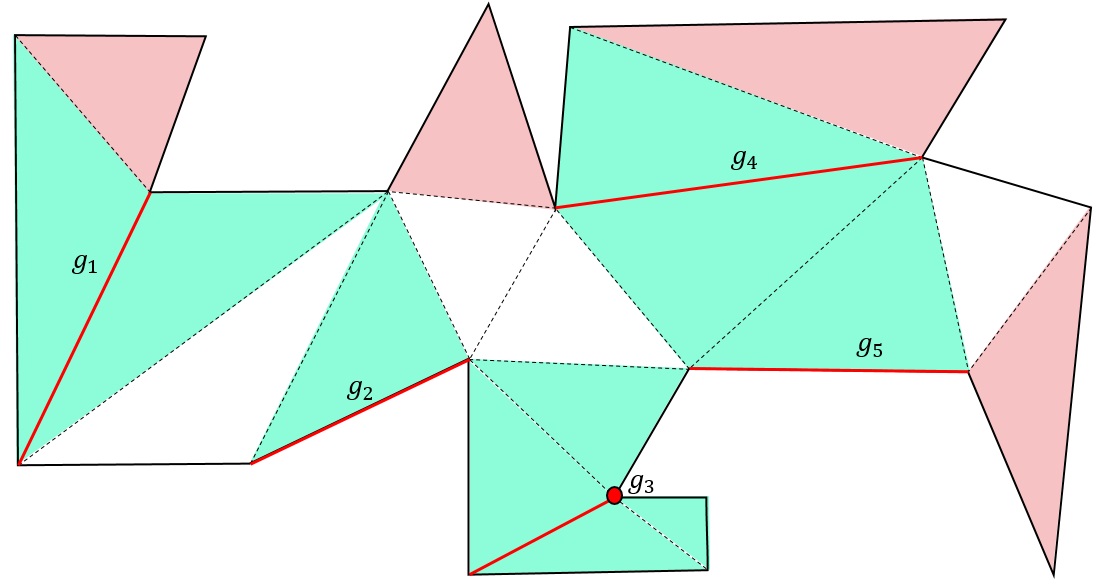}
		\end{center}
		\caption{Classification of triangles.}
		\label{fig:classtriang}
	\end{figure}
	
	\begin{enumerate}
		\item Type $0$ Guards: A guard $g_i \in S_g$ is of type $0$ if there there is an endpoint of $h_i$ such that all the triangles incident to such endpoint are safe triangles, which means that each one of them (excepting the triangles in $A(i)$) is covered by a guard $g_{k} \in S_g \backslash \{g_i\}$ when $p_I \in T$. Hence, a Type $0$ guard does not require to be located at one endpoint of its diagonal, so it can just stay motionless at the opposite endpoint, labeled as $v_1(i)$. An example of a Type $0$ guard is $g_3$ in Figure \ref{fig:classguard}.
		
		\item Type $1$ Guards: A guard $g_i$ (not of Type $0$) is of Type $1$ if it has an unsafe zone $U_k(i)$, and there are no regular triangles in $T_k(i)$ adjacent to $B(i)$. In such a case, we label $k=1$. For Type $1$ guards, we define an internal critical curve, denoted by $S_{int}^1(i)$, as the boundary between $U_1(i)$ and $B(i)$. $S_{int}^1(i)$ partitions $P$ into two regions. We define a region $R= \bigcup_{T_j \in U_1(i)} T_j$, and let $P_R$ denote the partition containing $R$. We define an external critical curve, $S_{ext}^1(i)$, as the curve inside $P \backslash P_R$ which is at a distance of $d_M^i$ from $S_{int}^1(i)$. The critical region $C_1(i)$ is defined as the region inside $P$ ``enclosed'' by $S_{int}^{1}(i)$ and $S_{ext}^{1}(i)$. Depending on the location of $I$ inside $C_1(i)$, $g_i$ is located at a different point at the interior of $h_i$, see Equation (\ref{eq:1}). Given the definition of $S_{ext}^{1}(i)$ and $d_M^i$, the critical curves can be used to trigger the motion of $g_i$ such that $g_i$ starts moving from one endpoint of $h_i$ to the other when $I$ enters $C_1(i)$, and it is ensured that $g_i$ can reach the opposite endpoint of $h_i$ at the same time that $I$ reaches the other critical curve bounding $C_1(i)$. A Type $1$ guard $g_1$ is illustrated in Figure \ref{fig:classguard} along with its critical curves $S_{int}^{1}(1)$ and $S_{ext}^{1}(1)$ shown as blue curves.
		
		\item Type $2$ Guards: 	A guard $g_i \in S_g$ (not of Type $0$ or Type $1$) is of Type $2$ if all the neighboring guards\footnote{We say that a guard $g_k \in S_g\backslash \{g_i\}$ is a \textit{neighbor} of $g_i$ if $T(i) \cap T(k) \neq \emptyset$.} that can cover the regular triangles incident to one endpoint of $h_i$ have their critical regions already defined. Such an endpoint is then labeled as $v_1(i)$. In Figure \ref{fig:classguard}, $g_1$ has its critical region defined since it is a Type $1$ guard. Moreover, since $g_1$ is the only neighbor of $g_2$ that can cover a regular triangle incident to one endpoint of the diagonal of $g_2$, then $g_2$ meets the definition of a Type $2$ guard. Define $R_1(i) \subset T_1(i)$ as the set of regular triangles incident to $v_1(i)$, and define $N_1(i)$ as the set of all neighboring guards that cover triangles in $R_1(i)$. To generate the critical region $C_1(i)$ of a type $2$ guard $g_i$, each $T_j \in R_1(i)$ is considered. We define $S_{T_j} \subset N_1(i)$ as the set of guards that can cover $T_j$. Let $B_j = T_j\cap (\bigcap_{g_l \in S_{T_j}}{C_1(l)})$. If $B_j \neq \emptyset$ and if $p_I \in B_j$, then $g_i$ is the only guard that can cover $B_j$, since, by the definition of a critical region, there is no guard in $S_{T_j}$ that is covering $T_j$. Let $S_B$ be the region obtained from the union of all regions $B_j$ and all the unsafe triangles incident to $v_1(i)$. We define $S_{int}^{1}(i)$ as the boundary of $S_B$. $S_{ext}^{1}(i)$ and $C_1(i)$ are defined in the same manner that they are defined for Type $1$ guards. $S_{ext}^{1}(i)$ is the curve inside the region $P \backslash S_B$ which is at a distance of $d_M^i$ from $S_{int}^{1}(i)$, and $C_1(i)$ is then the region inside $P$ ``enclosed'' by $S_{int}^{1}(i)$ and $S_{ext}^{1}(i)$. In Figure \ref{fig:classguard}, $S_{int}^{1}(2)$ is the closed black curve inside the regular triangle shared by $g_1$ and $g_2$, and $S_{ext}^{1}(2)$ is shown as a black curve inside $P$ that maintains a constant distance from  $S_{int}^{1}(2)$.
		
		\item Type $3$ Guards: A guard $g_i \in S_g$ is a type $3$ guard if it is not of Type $0$,$1$ or $2$. A distinctive characteristic of this type is that there is at least one regular triangle incident to each endpoint of $h_i$ such that it is adjacent to $B(i)$, and there is at least a pair of neighbors without their critical regions defined. Hence, we cannot determine all the regions $R$ that would define the internal critical curve of a Type $2$ guard. The presence of Type $3$ guards is associated with the presence of a cyclic arrangement of Type $3$ guards and regular triangles as shown in Figure \ref{fig:path} (a). Clearly, there is an ambiguity in allocation for those regular triangles since the region that must be assigned to $g_i$ cannot be found as in the cases of Type $1$ and Type $2$ guards. Thus, for Type $3$ guards, we proceed by transforming all the regular triangles incident to one of the endpoints of its diagonal into unsafe triangles, the selected endpoint is then labeled as $v_1(i)$. This turns $g_i$ into a Type $1$ guard, so a critical region for it can be obtained. Consequently, $g_i$ has the task of covering all the triangles in $T_1(i)$. Since all triangles in $R_1(i)$ are unsafe triangles that must be covered by $g_i$, they are considered as ``safe triangles'' for the neighboring guards. The change of the types of triangles, changes the types of the neighboring guards(Type $3$ guards may transform into any of the other types, Type $2$ guards may become Type $0$ or Type $1$ guards, and Type $1$ guards may become Type $0$ guards). In Figure \ref{fig:classguard}, two Type $3$ guards $g_4$ and $g_5$ are illustrated. Triangles $T_3$ and $T_4$ are regular triangles shared by both type $3$ guards, so it is not possible to obtain the internal critical curve of $g_4$ and $g_5$. Hence, $T_4$ is arbitrarily redefined as an unsafe triangle for $g_5$ (and consequently as a safe triangle for $g_4$), that means that it is assigned to $g_5$. As a consequence, $g_5$ meets the definition of a Type $1$ guard, and since $T_4$ is considered a safe triangle by $g_4$, then it also meets the definition of a Type $1$ guard. The resulting critical curves are also illustrated.	
	\end{enumerate}
	
	\begin{figure} 
		\begin{center}
			\includegraphics[width=0.95\linewidth,height=0.54\linewidth]{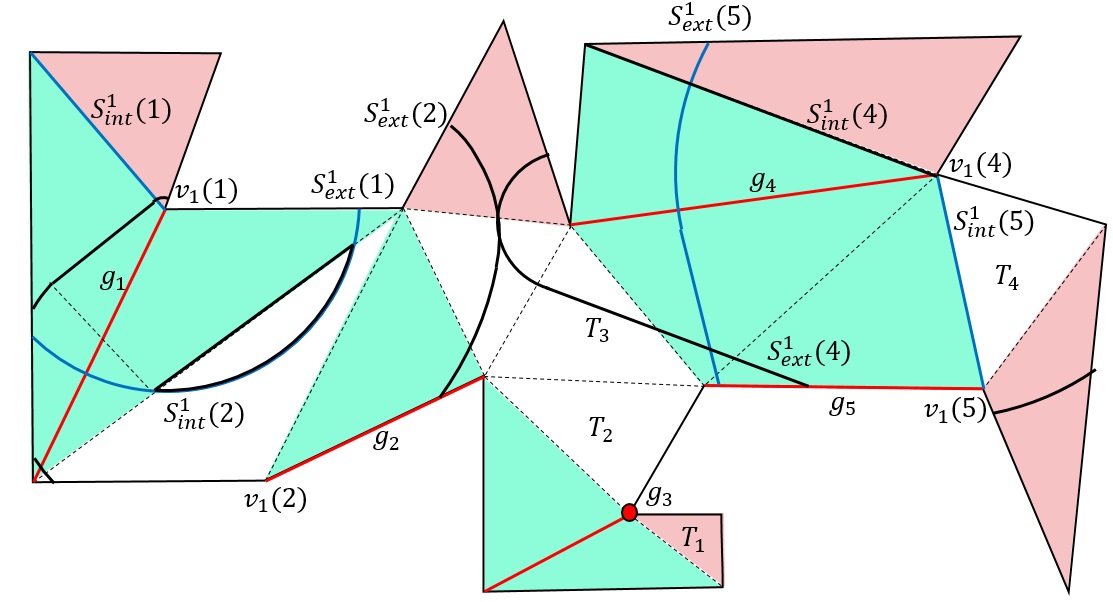}
		\end{center}
		\caption{Example of all the types of guards in the same environment: $g_1$ is a Type $1$ guard, $g_2$ is a Type $2$ guard, $g_3$ is a Type $0$ guard, and $g_4$ and $g_5$ are Type $5$ guards.}
		\label{fig:classguard}
	\end{figure}
	
	\subsection{Reactive Strategy}
	
	Let $p_I \in C_1(i)$ and let $d_{min}^1(i)$ be the minimum distance between $p_I$ and $S_{int}^{1}(i)$. The following equation maps the location of the intruder to obtain the location of a guard, $p_{g_i}$, along its diagonal:	
	\begin{equation}
	\label{eq:1}
	p_{g_i}= p_{v_1(i)}+ \frac{d_{min}^1(i)}{d_M^i}(p_{v_2(i)}-p_{v_1(i)}),
	\end{equation}
	where $p_{v_1(i)}$ and $p_{v_2(i)}$ are the coordinates of $v_1(i)$ and $v_2(i)$ respectively. If $p_I \notin C_1(i)$, $g_i$ remains static at $v_1(i)$ or $v_2(i)$ depending on the location of $I$.
	
	We present an important result that determines the condition under which non-safe (regular or unsafe) triangles can be turned into safe triangles.
	
	\begin{lemma} 
		\label{lemma:10}
		A non-safe triangle $T$ is covered if and only if $\bigcap_{g_i \in S_g(T)} C_1(i) \cap T = \emptyset$.
	\end{lemma}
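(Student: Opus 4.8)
The plan is to reduce the biconditional to a single \emph{local} statement about the reactive strategy and then close both directions by elementary set complementation. Recall that each guard $g_i \in S_g(T)$ sees the non-safe triangle $T$ only from the endpoint $v_1(i)$ of $h_i$ that is a vertex of $T$, and that by Equation~(\ref{eq:1}) $g_i$ sits exactly at $v_1(i)$ iff $d_{min}^1(i)=0$, i.e.\ iff $p_I$ lies on or inside $S_{int}^1(i)$ (in the reserved region $S_B$ bounded by $S_{int}^1(i)$), whereas $p_I$ strictly inside $C_1(i)$ forces $d_{min}^1(i)>0$ and hence pushes $g_i$ into the interior of $h_i$. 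The key fact I would prove is the following local characterization: for every intruder position $p \in T$, guard $g_i$ covers $T$ if and only if $p \notin C_1(i)$. One implication is immediate from Equation~(\ref{eq:1}): if $p$ is in $C_1(i)$ then $g_i$ is strictly off $v_1(i)$ and therefore cannot see $T$. The converse, that $p\in T\setminus C_1(i)$ forces $g_i$ to be \emph{at} $v_1(i)$, is where the construction of the critical regions does the real work.

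Granting the local characterization, both directions are short. For ($\Leftarrow$), assume $\bigcap_{g_i\in S_g(T)} C_1(i)\cap T=\emptyset$. Then every $p\in T$ lies outside $C_1(i)$ for at least one $g_i\in S_g(T)$; by the local fact that guard covers $T$ while $p_I=p$, so $T$ is covered for all intruder positions inside it. For ($\Rightarrow$) I argue the contrapositive: if the intersection is nonempty, choose $p^\ast\in \bigcap_{g_i\in S_g(T)}C_1(i)\cap T$ and place the intruder at $p^\ast$; then $p^\ast\in C_1(i)$ for \emph{every} $g_i\in S_g(T)$, so by the local fact none of them covers $T$, and since $S_g(T)$ is by definition the complete set of guards that can ever see $T$, the triangle is uncovered at that instant. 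The unsafe-triangle case is the degenerate instance of this argument: there $|S_g(T)|=1$, and $T\subseteq R\subseteq P_R$ while $C_1(i)\subseteq P\setminus P_R$, so $C_1(i)\cap T=\emptyset$ automatically, matching the fact that the single guard holds $v_1(i)$ throughout $T$.

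The main obstacle is the converse half of the local characterization, namely that while the intruder ranges over $T$ no guard of $S_g(T)$ is ever idle at its far endpoint $v_2(i)$; equivalently $T\subseteq S_B\cup C_1(i)$ for each such guard, so that $p\in T\setminus C_1(i)$ indeed puts $p$ in $S_B$ and pins $g_i$ to $v_1(i)$. I would establish this from the very definition of the reserved region for Type~$2$ guards: the sets $B_j=T_j\cap(\bigcap_{g_l\in S_{T_j}}C_1(l))$ assign to $g_i$ exactly the part of each regular triangle $T_j$ on which all of its partner guards have been pulled into their own critical regions, while the width-$d_M^i$ buffer $C_1(i)$ (measured from $S_{int}^1(i)$ out to $S_{ext}^1(i)$) is dimensioned so that $g_i$ reaches $v_1(i)$ precisely as $p_I$ crosses into $S_B$. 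Verifying that these reserved pieces, taken over the ordering in which the critical regions are defined, tile $T$ with no idle gap is the delicate step; the acyclicity secured by converting the cyclic Type~$3$ arrangements into unsafe assignments is exactly what rules out the circular dependency that would otherwise make the handoff --- and hence the local fact --- fail.
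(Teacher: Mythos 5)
Your proposal is correct and follows essentially the same route as the paper's own proof: both directions come down to the observation that $p_I \in C_1(i)$ pulls $g_i$ off $v_1(i)$ while $p_I \in T \setminus C_1(i)$ pins it there, so the triangle is uncovered exactly on $\bigcap_{g_i \in S_g(T)} C_1(i) \cap T$. You are in fact more careful than the paper, whose two-line $\Leftarrow$ argument silently assumes the step you isolate as the delicate one --- that $T \setminus C_1(i) \subseteq S_B$, i.e.\ a point of $T$ outside the critical region lies on the inner side of $S_{int}^1(i)$ rather than beyond $S_{ext}^1(i)$, so the guard is parked at $v_1(i)$ and not at $v_2(i)$.
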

	\begin{proof} $\Rightarrow$ If $T$ is covered, a guard is always present at its vertex in case $I$ lies inside it. If $I$ lies inside the critical region associated with a guard, the guard cannot lie on the vertex of $T$. Therefore, the critical regions associated with the guards cannot have a common intersection inside $T$ since it is covered. $\Leftarrow$ If $\bigcap_{g_i \in S_g(T)} C_1(i) \cap T = \emptyset$, every point inside $T$ lies outside the critical region of at least one $g_i \in S_g(T)$. Therefore, $T$ is covered. 
	\end{proof}

	\section{Activation of Additional Guards}
	\label{sec:activ}
	\vspace{-0.05in}
	In this section, we present a procedure to activate or deactivate the guards to ensure tracking depending on the instantaneous value of $r$. Once $S_g$ and $S_c$ are defined and the critical regions of the guards in $S_g$ are obtained, we want to determine the number of activated guards based on $v_e$. First, we introduce some notation. Given $T \in T(G)$, let $V(T)$ and $E(T)$ be the sets of vertices and edges of $T$ respectively. $S_v^{ac},S_v^{in} \subset S_g^v$ denote the set of activated and deactivated vertex guards respectively. $S_v^{di} \subseteq S_c$ is the set of candidate vertices associated to the diagonals in $S_h$. Clearly, $|S_c|=|S_v^{di}|+|S_v^{ac}|+|S_v^{in}|$. Since $S_h$ dominates $G$, the polygon is covered by $S_g$ as $v_g \rightarrow \infty$. Therefore, the guards in $S_g$ are sufficient to track the intruder at $r=0$. However, some vertex guards might need to be activated as $r$ increases.
	
	\begin{figure} [thpb]
		\begin{center}
			\includegraphics[width=0.93\linewidth,height=0.82\linewidth]{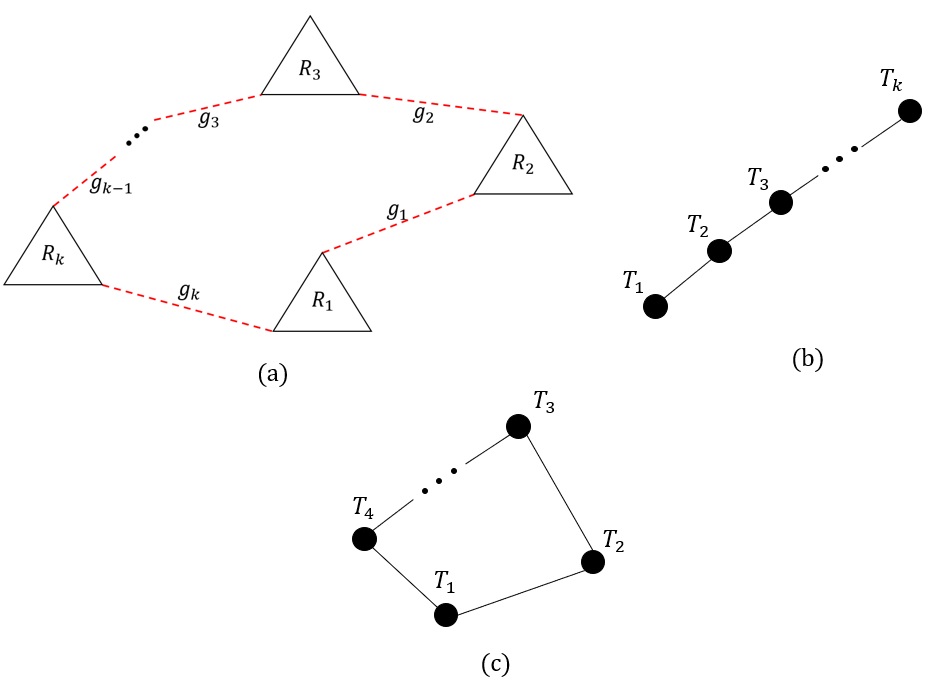}
		\end{center}
		\caption{(a) The presence of type $3$ guards means that a closed path between regular triangles and diagonals of guards can be found in $G$. (b) Graph that represents the recursive process of Algorithm \ref{alg:activation}. (c) The graph cannot have a closed path.}
		\label{fig:path}
	\end{figure}
	
	We define $S_T' \subset T(G)$ as the set of non-safe triangles of $G$. Algorithm \ref{alg:update} shows the pseudocode to activate guards based on the condition in Lemma \ref{lemma:10}. The input to this algorithm is the polygon, the position of the guards, and the instantaneous $r$. At every iteration, Algorithm \ref{alg:update} searches for a triangle that cannot be covered ({\bf Lines $11$ and $19$}) based on Lemma \ref{lemma:10}. If such a triangle exists, Algorithm \ref{alg:activation} searches for a guard in $S_v^{in}$ to activate by the following recursive process on a graph $G_{rec}$ (refer to Figure \ref{fig:path} (b)). The algorithm starts with a non-safe triangle $T_c$, if $S_v^{in} \cap V(T_c) =\emptyset$, then it selects a $g_i \in S_g$ followed by a non-safe triangle at the other endpoint of $h_i$. For each non-safe triangle visited by the algorithm we define a vertex a $v \in V(G_{rec})$ and for each $g_i$ used in the algorithm we define an edge $v_1v_2 \in E(G_{rec})$. Thus, a graph $G_{rec}$ is built. It is a path that ends at a vertex $v$ that corresponds to a triangle $T_c$ where $S_v^{in} \cap V(T_c) \neq \emptyset$. 
	
	\begin{lemma} 
		\label{lemma:11}
		Algorithm \ref{alg:activation} terminates in finite steps. 
	\end{lemma}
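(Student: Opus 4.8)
The plan is to reduce the termination claim to two facts: that the graph $G_{rec}$ built by the recursion is a simple path (it never revisits a vertex), and that the set of objects it can traverse is finite. By construction each step of Algorithm \ref{alg:activation} appends to $G_{rec}$ a new vertex, corresponding to a non-safe triangle, joined to the current vertex by an edge, corresponding to the guard $g_i \in S_g$ used to ``pass'' the covering responsibility to the triangle incident to the opposite endpoint of $h_i$. The recursion halts exactly when it reaches a non-safe triangle $T_c$ with $S_v^{in} \cap V(T_c) \neq \emptyset$, i.e.\ a triangle at whose vertices a deactivated vertex guard is available for activation. Hence it suffices to show that the walk can visit each non-safe triangle (equivalently, use each guard) at most once; since the number of non-safe triangles $|S_T'|$ and the number of diagonal guards $|S_g| \leq \lfloor n/4 \rfloor$ are both finite, the path then has length at most $\min(|S_T'|,|S_g|)$ and the algorithm stops.

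The heart of the argument is therefore to prove that $G_{rec}$ contains no closed path (Figure \ref{fig:path}(c)). I would argue this by contradiction: a cycle in $G_{rec}$ would be a closed alternating sequence of non-safe triangles and guards $g_{i_1}, g_{i_2}, \dots$, in which each guard receives its covering obligation from the previous triangle and discharges it to the next, returning to the start. Such a configuration is exactly the cyclic arrangement of regular triangles and diagonal guards depicted in Figure \ref{fig:path}(a), which is the signature of Type $3$ guards. However, the classification in Section \ref{sec:crit} removes every such arrangement before the activation phase begins: for each Type $3$ guard $g_i$, all regular triangles incident to one endpoint of $h_i$ are converted into unsafe triangles assigned to $g_i$ (turning $g_i$ into a Type $1$ guard with a fixed designated endpoint $v_1(i)$ and a well-defined critical region $C_1(i)$). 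After this transformation every guard has a single endpoint at which it is responsible, so the ``passing'' relation between triangles and guards is acyclic, and no closed walk can exist --- a contradiction.

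Once the no-cycle property is established, termination is immediate: the walk is a simple path on a finite vertex set, so it reaches a halting triangle after finitely many steps. The main obstacle I expect is the rigorous step linking a hypothetical cycle in $G_{rec}$ to a Type $3$ cyclic arrangement and, conversely, showing that the Type $3$ resolution provably destroys \emph{all} such cycles rather than merely the one illustrated. Making this precise requires tracking, for each edge of $G_{rec}$, which endpoint $v_1(i)$ of $h_i$ the guard is assigned to, and verifying that the induced orientation on the triangle--guard incidence structure is consistent and acyclic across the whole polygon; the tree structure of the dual graph $D$ should be the tool that rules out any residual cycle once the Type $3$ guards have been eliminated.
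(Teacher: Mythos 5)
Your argument is essentially identical to the paper's proof: both reduce termination to the finiteness of $S_g$ plus the claim that $G_{rec}$ cannot contain a cycle, and both justify the latter by noting that such a cycle would correspond to the cyclic arrangement characterizing Type $3$ guards, all of which are converted to Type $1$ guards before the activation phase. Your closing paragraph candidly flags the one step the paper also leaves informal --- the rigorous equivalence between a cycle in $G_{rec}$ and a Type $3$ arrangement --- but the overall route is the same.
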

	\begin{proof}
		The recursive procedure chooses a diagonal guard at each step. If the algorithm does not terminate, it implies that either $|S_g| = \infty$ (which is not possible), or a cycle exists in $G_{rec}$ (refer to Figure \ref{fig:path} (c). The latter cannot occur since the existence of such a cycle implies the existence of Type $3$ guards (refer to Figure \ref{fig:path} (a)). Since all Type $3$ guards are converted to Type $1$ guards before the algorithm starts (discussed in Section \ref{sec:crit}), $G_{rec}$ cannot have a cycle.
	\end{proof}
	
	At the end of each execution of Algorithm \ref{alg:activation}, at least one non-safe triangle is converted into a safe triangle, and this in turn reduces the area of $\bigcap_{g_i \in S_g(T_c)} C_1(i)$. Moreover, since the number of non-safe triangles is finite, the number of executions of Algorithm \ref{alg:activation} is finite. After each iteration of Algorithm \ref{alg:update}, a non-safe triangle $T$ gets converted to a safe triangle, and there are no regions inside $T$ that need to be covered by any diagonal guard. This change ``relieves'' the guards $g_i \in S_g(T)$ of the responsibility of covering $T$. Consequently, the critical regions $C_1(i)$ decrease their size. Moreover, smaller critical regions $C_1(i)$ increase the chance of the condition of Lemma \ref{lemma:10} to be met. Finally, the following lemma shows that Algorithm \ref{alg:update} terminates.

	Algorithm 3, forces lemma 2 to be satisfied for some guards, tracking is guaranteed (maybe this needs to be proved if the paper is self-contained), lemma 4 proves that algorithm 2 calls algorithm 3 (forces lemma 2 to be satisfied) any time that it is required so tracking is ensured for all the guards, that implies correctness. 
	
	\begin{lemma}
		Algorithm \ref{alg:update}, guarantees that the triangle where $I$ is located is always covered by a guard regardless of $v_e$.
	\end{lemma}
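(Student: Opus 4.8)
The plan is to reduce the claim to the coverage criterion already established in Lemma \ref{lemma:10} and then argue that Algorithm \ref{alg:update} drives the configuration into a state in which that criterion holds for every triangle that could contain the intruder. At any instant the triangle $T$ containing $I$ is either safe or non-safe. If $T$ is safe, then by the definition of a safe triangle a guard is always located at one of its vertices whenever $p_I \in T$, so $T$ is covered and there is nothing to prove. The entire argument therefore concentrates on the non-safe triangles.

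For a non-safe triangle $T$, Lemma \ref{lemma:10} states that $T$ is covered if and only if $\bigcap_{g_i \in S_g(T)} C_1(i) \cap T = \emptyset$. First I would observe that Algorithm \ref{alg:update} is designed precisely to test this condition: at each iteration it searches (Lines 11 and 19) for a non-safe triangle violating it, i.e.\ a triangle in which the critical regions of all responsible diagonal guards have a common intersection. If such a triangle $T_c$ is found, Algorithm \ref{alg:activation} is invoked, and by Lemma \ref{lemma:11} it terminates in finitely many steps, activating a deactivated vertex guard at a vertex of some non-safe triangle reached along the path $G_{rec}$. Activating that vertex guard makes the corresponding triangle safe, which is the engine of progress.

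The key step is a monotonicity argument for the outer loop. I would show that activating a vertex guard can only help: once a non-safe triangle is converted into a safe triangle, the diagonal guards $g_i \in S_g(T_c)$ are relieved of the responsibility of covering it, so their critical regions $C_1(i)$ can only shrink, and a global shrinking of critical regions cannot create a new violation of the Lemma \ref{lemma:10} condition in any other triangle, since each intersection $\bigcap C_1(i) \cap T$ can only shrink as well. Hence conversions to safe are permanent and never undone. Because the number of non-safe triangles is a nonnegative integer that strictly decreases by at least one with each successful call to Algorithm \ref{alg:activation}, and is bounded below by $0$, the outer loop of Algorithm \ref{alg:update} must halt after finitely many iterations. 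At termination no non-safe triangle violates the condition of Lemma \ref{lemma:10}, so every non-safe triangle is covered; together with the trivially covered safe triangles this shows that \emph{every} triangle of the triangulation is covered, and in particular the triangle currently containing $I$.

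I expect the monotonicity and termination argument to be the main obstacle. Two points need care: (i) verifying that activating a vertex guard shrinks, rather than enlarges, the critical regions of the neighboring diagonal guards, which relies on the fact that $C_1(i)$ is built only from the triangles $g_i$ is still responsible for together with the offset $d_M^i = r\,l_i$ at the fixed instantaneous $r$; and (ii) confirming that Algorithm \ref{alg:activation} always succeeds in locating a deactivated vertex guard rather than failing, which follows from Lemma \ref{lemma:11}, since the acyclicity of $G_{rec}$ (guaranteed by the prior conversion of all Type $3$ guards into Type $1$ guards) forces the recursion to terminate at a triangle possessing an available vertex guard. As a consistency check on the extreme regime, as $r \to \infty$ the critical regions expand until every vertex guard has been activated, at which point all candidate vertices in $S_c$ are occupied by active guards; since $S_c$ dominates $G$ by Theorem \ref{theorem:3}, all triangles are then safe and coverage holds unconditionally.
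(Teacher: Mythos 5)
Your proposal is correct and follows essentially the same route as the paper's proof: safe triangles are trivially covered, non-safe triangles are handled by reducing to the criterion of Lemma \ref{lemma:10}, termination of the inner recursion comes from Lemma \ref{lemma:11}, and termination of the outer loop comes from the fact that each iteration permanently converts a non-safe triangle to safe. Your monotonicity argument (that shrinking critical regions cannot create new violations) is a welcome elaboration of a step the paper only asserts in the surrounding discussion rather than inside the proof itself.
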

	\begin{proof}
		The proof is trivial for safe triangles since by definition, they are always covered when there is an intruder inside them according to Lemma \ref{lemma:10}. For the case of non-safe triangles, Algorithm \ref{alg:update} activates vertex guards for the current value of $v_e$ until the condition of Lemma \ref{lemma:10} is true for all non-safe triangles. Since Algorithm \ref{alg:activation} terminates in finite steps, and since after each iteration of Algorithm \ref{alg:update}, a non-safe triangle $T$ gets converted to a safe triangle, Algorithm \ref{alg:update} terminates after a finite number of steps. Since the condition of Lemma \ref{lemma:10} is met by all the triangles after a finite number of iterations of Algorithm \ref{alg:update}, the result follows.
	\end{proof}
	
	\begin{algorithm}
		\caption{UpdateActiveGuards}
		\begin{algorithmic}[1]
			\State\textbf{Input}: $P$,$G$,$r$,$S_T'$,$S_g^{di}$,$S_v^{in}$,$S_g$,$S_v^{ac}$ 
			\State\textbf{Output}: Updated $S_v^{ac}$ and $S_v^{in}$
			\If{$r$ increased}
			\State for each $g_i \in S_g$ compute $C_1(i)$
			\State $S_T^1 \leftarrow \{T \in S_T':$ condition of Lemma \ref{lemma:10} fails  $\}$
			\While{$|S_T^1|>0$}
			\State $T_c \leftarrow $ arbitrary $T \in S_T^1$
			\State \textbf{ActivateGuard}($T_c$,$S_T'$,$S_v^{di}$,$S_v^{in}$,$S_g$)
			\State update classification of guards in $S_g$
			\State for each $g_i \in S_g$ compute $C_1(i)$
			\State $S_T^1 \leftarrow \{T \in S_T':$ condition of Lemma \ref{lemma:10} fails $\}$
			\EndWhile
			\ElsIf{$r$ decreased}
			\While {$S_T^1= \emptyset$ and $S_v^{ac} != \emptyset$}
			\State $g \leftarrow$ arbitrary $g_i \in S_v^{ac}$
			\State deactivate $g$
			\State update classification of guards in $S_g$
			\State for each $g_i \in S_g$ compute $C_1(i)$
			\State $S_T^1 \leftarrow \{T \in S_T':$ condition of Lemma \ref{lemma:10} fails $\}$
			\EndWhile
			\State activate $g$
			\EndIf
		\end{algorithmic}
		\label{alg:update}
	\end{algorithm}

	\section{Results}
	
	\label{sec:examp}
	
	In this section, examples that illustrate the relation between different values of $r$ found through simulation and the number of active guards are presented. Figure \ref{fig:guards1} (a) shows a simple polygon with $n=19$ vertices. The candidate vertices (from Algorithm \ref{alg:guards}) are marked in red. Diagonal drawn in red represent the diagonal guards. $|S_c|=5<\lfloor 19/3 \rfloor=6$ with $|S_g|=3$, $S_v^{ac} = \emptyset$ and $|S_v^{in}|=2$. $g_3$ needs to cover triangles $7$ and $9$ from each endpoint. When $r>0$, the critical region of $g_3$ does not meet the condition of Lemma \ref{lemma:10} in triangle $9$. As a result, Algorithm \ref{alg:update} activates the inactive guard in triangle $9$. Activated vertex guards are marked in yellow with the corresponding $r$ that triggers their activation. After activation, triangles $9,10,15$ and $16$ become safe. It follows that $g_3$ meets the definition of a Type $0$ guard covering triangles $5,6,7$ and $17$ (so all of them become safe triangles). For $0<r \leq 0.8$, $|S_g \cup S_v^{ac}|=4$. For $r>0.8$, the condition of Lemma \ref{lemma:10} is violated in triangle $8$ (the critical regions of $g_1$ and $g_2$ intersect in $8$). Hence, Algorithm \ref{alg:update} finds the inactive guard in triangle $8$ and activates it, thereby, covering triangles $13,14,15,16$ and $8$. Consequently, $g_1$ and $g_2$ become Type $0$ covering the remaining triangles. The red plot in Figure \ref{fig:guards1} (b) shows the variation in the number of active static guards as $r$ increases. For any $r>0.8$, the total number of guards does not change since the environment is completely covered at that point. Persistent area coverage is achieved for any value of $r$ greater than $0.8$ which implies that even if the intruder has infinite speed the set of $3$ diagonal guards and $2$ vertex guards suffices to keep track of $I$.

	\begin{algorithm}
		\caption{ActivateGuard}
		\begin{algorithmic}[1]
			\State\textbf{Input}: $T_c$,$S_T'$,$S_g^{di}$,$S_v^{in}$,$S_g$
			\State\textbf{Output}: Updated $S_T'$, $S_v^{in}$
			\If{$S_v^{in} \cap V(T_c) \neq \emptyset$}
			\State $v_c \leftarrow$ arbitrary $v \in S_g^{in} \cap V(T_c)$
			\State activate guard in $v_c$
			\State update classification of triangles
			\Else
			\State $v_c \leftarrow$ arbitrary $v \in S_v^{di} \cap V(T_c)$
			\State $g \leftarrow$ $g_i \in S_g$ such that $v_c=v_j(i)$
			\State $T_c \leftarrow$ arbitrary $T_c \in T_k(i) \cap S_T'$
			\State \textbf{ActivateGuard}($T_c$,$S_T'$,$S_v^{di}$,$S_v^{in}$,$S_g$)
			\EndIf
		\end{algorithmic}
		\label{alg:activation}
	\end{algorithm}
	
	\begin{figure}[thpb]
		\begin{center}
			\subfigure[]{\includegraphics[width=0.63\linewidth,height=0.64\linewidth]{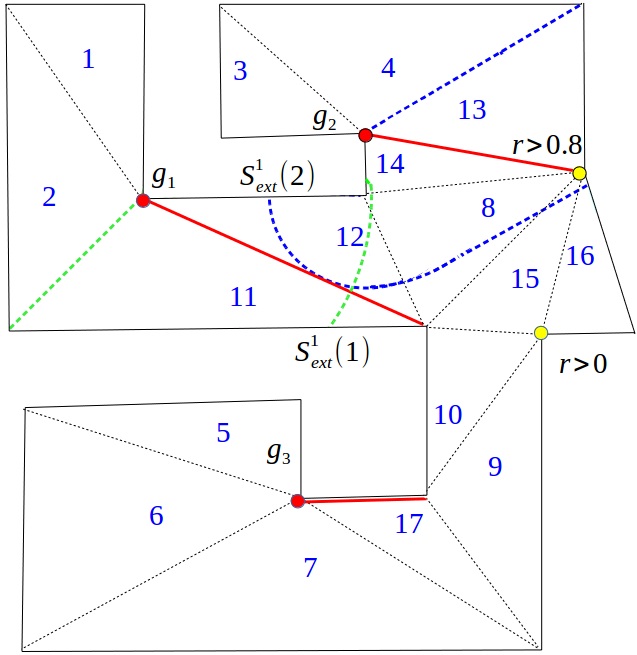}}
			\subfigure[]{\includegraphics[width=0.75\linewidth,height=0.59\linewidth]{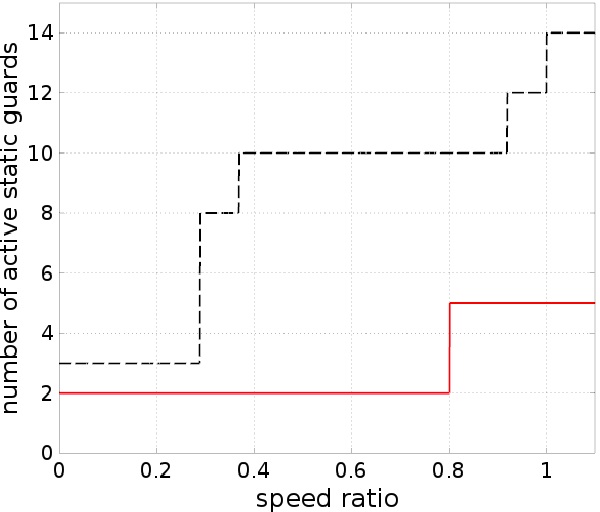}}
		\end{center}
		\caption{(a) Diagonal guards and static guards. Each static guard is labeled with the value of $r$ that triggers its activation. (b) The number of activated static guards increases as $r$ grows for examples $1$ (red solid line) and $2$ (dashed black line).}
		\label{fig:guards1}
	\end{figure}

	In Figure \ref{fig:exp2}, a simple polygon with $n=48$ vertices representing the environment is illustrated. $|S_c|=14<\lfloor 48/3 \rfloor=16$ with $|S_g|=9$, $S_v^{ac} = \emptyset$ and $|S_v^{in}|=5$. Since $g_5$ needs to cover triangles $39$ and $42$ from both endpoints, it meets the definition of a Type $0$ guard so it can cover triangles $39,42,40,41,44$. These triangles along with other triangles with a diagonal guard as an edge are safe triangles. $g_4$ needs to cover triangles $60$ and $76$ from each endpoint. When $r>0$, the critical region of $g_4$ is such that it causes the condition of Lemma \ref{lemma:10} to be violated in triangle $60$. Therefore, the inactive guard in triangle $36$ becomes active. After the activation, triangles $37$ and $36$ become safe. It follows that $g_4$ meets the definition of a Type $0$ guard covering triangles $31$ and $32$. 
	
	\begin{figure}[thpb]
		\begin{center} 
			\includegraphics[width=0.82\linewidth,height=0.65\linewidth]{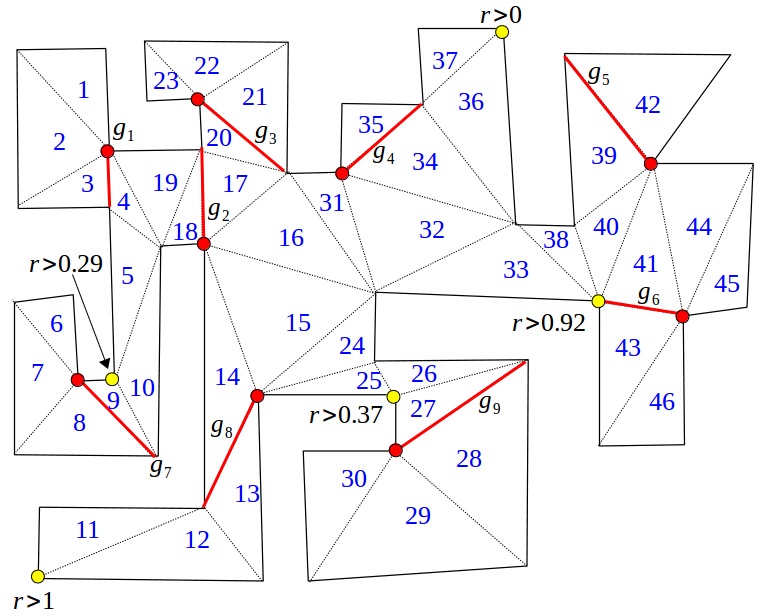}
		\end{center}
		\caption{Diagonal guards and static guards. Each static guard is labeled with the value of $r$ that triggers its activation.}
		\label{fig:exp2}
	\end{figure}
	
	For $0<r \leq 0.29$, $|S_g \cup S_v^{ac}|=10$. When $r>0.29$, the condition of Lemma \ref{lemma:10} is not met in triangle $10$. Hence, Algorithm \ref{alg:update} finds the inactive guard in triangle $10$ and activates it, so it covers triangles $5$, $10$ and $9$. Consequently, $g_7$ and $g_1$ become Type $0$. Therefore, $g_7$ covers triangles $6,7,8$ and $9$. $g_1$ covers triangles $1,2,3,4$ and $19$, which causes $g_2$ to also become Type $0$ covering triangles $14,15,16,17$ and $18$. For $0.29<r \leq 0.37$, $|S_g \cup S_v^{ac}|=11$. Nevertheless, when $r>0.37$, the condition of Lemma \ref{lemma:10} is not met in triangle $26$. Hence, Algorithm \ref{alg:update} activates the inactive guard located at a vertex of triangle $26$, so it covers triangles $25,26$ and $27$. $g_9$ becomes Type $0$ so it covers triangles $27,28,29$ and $30$. Following the same procedure, the inactive guard located at triangle $40$ is activated when $r>0.92$. When $r>1$, the condition of Lemma \ref{lemma:10} is violated in triangle $24$. Algorithm \ref{alg:update} does not find an inactive guard so it uses $g_8$ to find a triangle where there is an inactive guard that can be activated. The inactive guard in triangle $12$ is activated, and consequently, $g_8$ becomes a Type $0$ covering triangles $13,14,15,24$ and $25$. The dashed black plot in Figure \ref{fig:guards1} (b) shows the variation in the number of active static guards as $r$ increases. For any $r>1$, the total number of guards does not change since the environment is completely covered at that point. As it has been shown, the number of activated guards depends on $r$, so for low speed regimes of the intruder, not all guards need to be activated, in contrast with the case of only static guards where the total number of activated guards for both examples would be $5$ and $14$ respectively regardless of $r$.

	\section{Conclusions}
	\label{sec:conclusion}
	
	In this paper, we explored a problem in which a team of static and mobile guards track a mobile intruder with unknown maximum speed. We presented an algorithm to identify {\it candidate vertices} in a polygon that can serve as endpoints of diagonal guards or as a seat for vertex guards. We presented an activation strategy for the guards that is adaptive to the instantaneous speed of the intruder. Simulation results for two different environments illustrated the performance of the proposed techniques.
	
	In the future, we plan to extend our analysis to address the case of multiple intruders. Another direction of future research is to explore the effect of coordination among the intruders on the tracking performance. Other directions include addressing the tracking problem for guards with limited sensing and motion capabilities, for example, edge guards and line guards. 
	
	\bibliographystyle{spbasic}
	\bibliography{references3}
	
\end{document}